\newtheorem{thm}{Theorem}
\begin{document}
\title{Adjusted Empirical Likelihood for Long-memory Time Series Models}
\author{Ramadha D. Piyadi Gamage, Wei Ning \footnote{Corresponding author. Email:
wning@bgsu.edu}\hskip .1in and Arjun K. Gupta\\ Department of
Mathematics and Statistics\\Bowling Green State University, Bowling
Green, OH 43403, USA}

\date{}
\maketitle

\begin{abstract}
\noindent Empirical likelihood method has been applied to short-memory time series models by Monti (1997) through the Whittle's estimation method. Yau (2012) extended this idea to long-memory time series models. Asymptotic distributions of the empirical likelihood ratio statistic for short and long-memory time series have been derived to construct confidence regions for the corresponding model parameters. However, computing profile empirical likelihood function involving constrained maximization does not always have a solution which leads to several drawbacks. In this paper, we propose an adjusted empirical likelihood procedure to modify the one proposed by Yau (2012) for autoregressive fractionally integrated moving average (ARFIMA) model. It guarantees the existence of a solution to the required maximization problem as well as maintains same asymptotic properties obtained by Yau (2012). Simulations have been carried out to illustrate that the adjusted empirical likelihood method for different long-time series models provides better confidence regions and coverage probabilities than the unadjusted ones, especially for small sample sizes. \\

\noindent \textbf{Keywords}: Adjusted empirical likelihood; ARFIMA models; Bartlett correction; Confidence regions; Coverage probability; Whittle's likelihood.
\end{abstract}

\section{Introduction}
\noindent
Owen (1988, 1990, 1991) introduced empirical likelihood (EL) method which is the data-driven method combining the advantages of parametric and nonparametric methods. The most appealing property of the EL method is that the associated empirical likelihood ratio statistics asymptotically follows standard chi-square distribution, which is same as the one used in parametric analysis. Since then, it has been widely used to make statistical inference of parameters and construct confidence regions. See Owen (2001) for more details. However, when the data is dependent, it becomes difficult to apply the empirical likelihood method as it is originally designed for independent observations. Using EL method to address dependent data problems has been studied by many researchers. Mykland (1995) established the connection between the dual likelihood and the empirical likelihood through the martingale estimating equations and applied it to time series model. Monti (1997) developed the idea of extending the EL method to short-memory stationary time series by using the Whittle's (1953) method to obtain an M-estimator of the periodogram ordinates of time series models which are asymptotically independent. However, his method can not be applied directly to long-time memory time series model. Kitamura (1995) developed the blockwise empirical likelihood method for time series models. For long-memory or long-range dependence time series data, Hurvich and Beltrao (1993) showed that the normalized periodogram ordinates obtained from a Gaussian process are asymptotically neither independent identically distributed nor exponentially distributed. Nordman and Lahiri (2006) developed frequency domain empirical likelihood based on the spectral  distribution through the fourier transformation to study short and long range dependence. Yau (2012) extended Monti's idea to autoregressive fractionally integrated moving average (ARFIMA) model by showing that the dependence in periodogram only applies to a small portion of the periodogram ordinates with fourier frequencies tending to zero. However, the profile empirical likelihood function computation which involves constrained maximization requires the convex hull of the estimating equation to have zero vector as an interior point. When the solution does not exist, Owen (2001) suggested assigning $-\infty$ to the log-EL statistic. Chen et al. (2008) pointed out the drawbacks in doing so and proposed an adjusted empirical likelihood (AEL) method by adding a pseudo term which always guarantees the existence of a solution. They further showed that the asymptotic results of the AEL are similar to that of the EL. Moreover, it achieves improved coverage probabilities without using Bartlett-corrections. Based on their work, Piyadi Gamage et al. (2016) modified Monti's work by proposing an adjusted empirical likelihood for short-memory time series models. 

In this paper, we extend Yau's EL method for ARFIMA model by proposing an adjusted empirical likelihood method. The rest of the article is organized as follows. In Section 2, the EL for ARFIMA models is discussed. The AEL for stationary ARFIMA model is derived and the asymptotic distribution of the AEL statistic is established in Section 3. Simulations are carried out in Section 4 to compare the confidence regions of the proposed AEL method to the EL method for ARFIMA model for different distributions for the white noise term. In addition, coverage probabilities are calculated to illustrate the effectiveness of AEL method as compared to EL method with and without Bartlett-correction for different values of the parameters, different sample sizes and different distributions for the white noise term. Section 5 provides some discussion and proofs of results are given in the Appendix.

\section{Empirical Likelihood for ARFIMA Models}
A stationary ARFIMA(p,d,q) process~$Z_t$~is given by 
\begin{equation}
\Phi(B) (1-B)^d Z_t = \Theta(B)a_t  \tag{2.1}\label{eq:2.1}
\end{equation}
for some $-\frac{1}{2} < d < \frac{1}{2}$ where $B$ is the backward shift operator ($BZ_t=Z_{t-1}$), with $\Theta(B) = 1+\theta_1B + \theta_2B^2 + ... + \theta_qB^q$ 
and $\Phi(B)=1-\phi_1B - \phi_2B^2 - ... - \phi_p B^p.$ The absolute values of the roots of these two polynomials are all greater than 1 to guarantee the stationarity and invertibility of the model. We also assume that~$\Theta(B)$~and~$\Phi(B)$~have no common factors to avoid the redundancy of the parameters. 
We only consider the values of~$d$~between 0 and 0.5 since it is the most interesting long-memory scenario (Beran, 1994)  and only under this condition the dependence structure of periodogram ordinates has been established (Yau, 2012). The parameter $\beta =$ $(\phi_1,...,\phi_p,\theta_1,...,\theta_q,d, \sigma^2)^\prime \in \mathbb{B}$ is estimated by Whittle's method (Whittle, 1953) based on the periodogram where $\mathbb{B}$ is a compact subset of the $k$-dimensional Euclidean space $(k=p+q+2)$ .

Let $z_1, z_2, ..., z_T$ be $T$ observations from the process in \eqref{eq:2.1}. An approximate log-likelihood function is given by Whittle (1953),
\begin{equation}
\ln \{L(\beta)\} = -\sum_{j=1}^{n}\ln\{g_j(\beta)\} -\sum_{j=1}^{n}\frac{I(\omega_j)}{g_j(\beta)}, \tag{2.2}\label{eq:2.2}
\end{equation}
where $g_j(\beta)$ is the spectral density and
$$I(\omega_j) = \frac{1}{2\pi T} \bigg[ {\bigg\{\sum_{t=1}^{T}(z_t-\bar{z})\sin(\omega_j t)\bigg\} }^2 + {\bigg\{\sum_{t=1}^{T}(z_t-\bar{z})\cos(\omega_j t)\bigg\} }^2 \bigg]$$
is the periodogram ordinate evaluated at fourier frequency $\omega_j=2\pi j/T,j=1,2,...,n=\frac{T-1}{2}.$ The Whittle's estimator, $\beta_n$, maximizes \eqref{eq:2.2} over $\mathbb{B}$. Therefore, it is the solution of~$\psi(\beta)=\sum_{i=1}^n \psi_j\{I(\omega_j, \beta)\}=0,$~where 
\begin{equation}
\psi_j\{I(\omega_j),\beta\}=\bigg\{\frac{I(\omega_j)}{g_j(\beta)}-1\bigg\} \frac{\partial\ln\{g_j(\beta)\}}{\partial \beta}, \tag{2.3}\label{eq:2.3}
\end{equation}
Monti (1997) showed that this estimator has the interpretation of an M-estimator from asymptotically independent periodogram ordinates and applied the empirical likelihood to short-memory time series models. As pointed out by Yau (2012), the EL method used by Monti (1997) cannot be directly applied to long-memory scenario due to the dependence structure of the periodogram. As shown in Lemma 1 of Yau (2012), the periodogram ordinates are asymptotically independent which ensures that the empirical likelihood ratio statistic to be chi-squared distributed and uses the $\psi_j$'s in equation \eqref{eq:2.3} to construct the empirical likelihood.

Following Monti's (1997) argument, Yau (2012) extended the empirical likelihood ratio statistic to the ARFIMA(p,d,q) model defined by
\begin{equation}
{W}(\beta) = -2 \ln R(\beta) = 2 \, \sum_{j=1}^{n} \ln \big[ 1 + {\xi}(\beta)^\prime\psi_j(I(\omega_j),\beta)\big], \tag{2.4}\label{eq:2.4}
\end{equation}
where ${\xi}(\beta)$ is the Lagrangian multiplier satisfying $\sum_{j=1}^{n}\frac{\psi_j(I(\omega_j),\beta)}{1 + {\xi}(\beta)^\prime\psi_j(I(\omega_j),\beta)}=0$ and ${R}(\beta)$ is the empirical likelihood ratio for time series models defined by
$$R(\beta) = \left. {\sup \prod_{j=1}^{n} p_j} \middle/ {\prod_{j=1}^{n} \frac{1}{n}} \right. = \sup \prod_{j=1}^{n} np_j,$$
subject to the constraints:(i) $\sum_{j=1}^{n}\psi_j(I(\omega_j),\beta) p_j=0$, (ii) $\sum_{j=1}^{n} p_j=1$, and (iii) $p_j \geq 0 , j=1,2,...,n$ with $ p_j = [n\{1 + \xi(\beta)\psi_j(I(\omega_j),\beta)\}]^{-1}$. By extending Monti's result to ARFIMA models, Yau (2012) showed that ${W}(\beta)$ also has an asymptotic chi-squared distribution with $k$ degrees of freedom where $k=p+q+2$.

\section{Adjusted Empirical Likelihood for ARFIMA Models}
The definition of ${W}(\beta)$ in (2.4) depends on obtaining positive $p_js$ such that
\begin{align*} 
\sum_{j=1}^{n}\psi_j(I(\omega_j),\beta) p_j=0,
\end{align*}
for each $\beta$. Under some moment conditions on $\psi_j(x_j,\beta)$ (Owen 2001), the solution exists if the convex hull $\{\psi_j(x_j,\beta), j=1,2,...,n\}$ contains 0 as its interior point with probability 1 as $n \rightarrow \infty$. When the parameter $\beta$ is not close to $\beta_n$, or when $n$ is small, there is a good chance that the solution to the equation $\sum_{j=1}^{n}\psi_j(I(\omega_j),\beta) p_j=0$ doesn't exist which raises some computational issues as mentioned by Chen et al. (2008). To overcome this difficulty, Chen et al. (2008) proposed an adjusted empirical likelihood (AEL) ratio function by adding $\psi_{n+1}$-th term to guarantee the zero to be an interior point of the convex hull so that the required numerical maximization always has a solution. By doing so, they modified Owen's method and applied it to dependent observations with the asymptotic null distribution of the statistic as obtained by Owen. We adopt their idea to modify Yau's method for ARFIMA models.

Denote $\psi_j = \psi_j(\beta) =  \psi_j(I(\omega_j),\beta)$ and $\bar{\psi}_n = \bar{\psi}_n(\beta) = \frac{1}{n}\sum_{j=1}^{n}\psi_j $. For some positive constant $a_n$, define
$$\psi_{n+1} = \psi_{n+1}(\beta) =  -\frac{a_n}{n} \sum_{j=1}^{n}\psi_j = -a_n {\bar{\psi}}_n.$$

Here we choose $a_n=\max(1,\log(n)/2)$ suggested by Chen et al. (2008). Hence the adjusted empirical likelihood ratio for any value $\beta \in \mathbb{B}$ is given by,
$$R(\beta) = \left. {\sup \prod_{j=1}^{n+1} p_j} \middle/ {\prod_{j=1}^{n+1} \frac{1}{n}} \right. = \sup \prod_{j=1}^{n+1} (n+1)p_j,$$
where the maximization is subject to: (i) $\sum_{j=1}^{n+1}\psi_j p_j=0$, (ii) $\sum_{j=1}^{n+1} p_j=1$, and (iii) $p_j \geq 0 $. Similarly, by Lagrange multiplier method we obtain
$$ p_j = [(n+1)\{1 + {\xi}(\beta)^\prime\psi_j\}]^{-1} \qquad j=1,2,...,n+1, $$
where ${\xi}(\beta)$ is the Lagrangian multiplier satisfying,
$$\sum_{j=1}^{n+1}\frac{\psi_j}{1 + {\xi}(\beta)^\prime\psi_j}=0.$$

Thus the adjusted empirical likelihood ratio (AEL) statistic is defined by
\begin{equation}
{W}^*(\beta) = 2 \, \sum_{j=1}^{n+1} \ln \{1 + {\xi}(\beta)^\prime\psi_j\}. \tag{3.1}\label{eq:3.1}
\end{equation}

With argument similar to Yau (2012) and under the regularity conditions (A.1) to (A.6) given by Fox and Taqqu (1986), it can be shown that ${W}^*(\beta)$ has a chi-square distribution which is stated in the following theorem. As mentioned by Yau (2012), the following theorem also applies to ARFIMA process with $d\in [\delta, 0.5 - \delta]$ for any fixed $\delta > 0$, since $\beta$ belongs to a compact space.

\begin{thm}
Let $Z_t$ be an ARFIMA (p,d,q) process defined in \eqref{eq:2.1}, where $\beta \in \mathbb{B} \subseteq \mathbb{R}^k$, with $k=p+q+2$, and satisfying (A.1)-(A.6) given by Fox and Taqqu (1986). Then ${W}^*(\beta) \xrightarrow {d} \chi^2_k$.
\end{thm}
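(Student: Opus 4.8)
The plan is to show that augmenting the $n$ estimating functions with the single pseudo-term $\psi_{n+1}=-a_n\bar\psi_n$ leaves the first-order asymptotics untouched, so that $W^*(\beta)$ inherits the $\chi^2_k$ limit that Yau (2012) established for $W(\beta)$. The adjustment is designed precisely so that $0$ always lies in the interior of the convex hull of $\{\psi_1,\dots,\psi_{n+1}\}$, which removes the existence problem that plagues the unadjusted statistic; hence the Lagrange multiplier $\xi(\beta)$ solving $\sum_{j=1}^{n+1}\psi_j/(1+\xi'\psi_j)=0$ is well defined. First I would expand this score equation to first order in $\xi$, which gives $\xi(\beta)\approx S_{n+1}^{-1}\bar S_{n+1}$, where $\bar S_{n+1}=\sum_{j=1}^{n+1}\psi_j$ and $S_{n+1}=\sum_{j=1}^{n+1}\psi_j\psi_j'$. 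Everything then reduces to comparing these augmented sums with their unadjusted counterparts.

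The key observation is that the adjustment is asymptotically negligible. Since $\bar\psi_n=O_p(n^{-1/2})$ and $a_n=\max(1,\tfrac12\log n)=o(\sqrt n)$, the pseudo-term satisfies $\psi_{n+1}=-a_n\bar\psi_n=O_p(\log n/\sqrt n)$. Consequently $\tfrac1{\sqrt n}\bar S_{n+1}=(1-a_n/n)\sqrt n\,\bar\psi_n$ shares the same limit as $\sqrt n\,\bar\psi_n$, while the extra outer product adds only $a_n^2\bar\psi_n\bar\psi_n'=O_p(\log^2 n/n)=o_p(1)$ to $\tfrac1n S_{n+1}$, so $\tfrac1n S_{n+1}$ converges to the same limiting matrix $V$ as in the unadjusted case. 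Substituting into the Taylor expansion $W^*(\beta)=2\sum_{j=1}^{n+1}\{\xi'\psi_j-\tfrac12(\xi'\psi_j)^2+\cdots\}$ and using the constraint to cancel the linear contribution yields $W^*(\beta)=\big(\tfrac1{\sqrt n}\bar S_{n+1}\big)'\big(\tfrac1n S_{n+1}\big)^{-1}\big(\tfrac1{\sqrt n}\bar S_{n+1}\big)+o_p(1)$, a quadratic form in the augmented score.

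To finish, I would invoke the long-memory periodogram theory behind Yau (2012): under conditions (A.1)--(A.6) of Fox and Taqqu (1986), Lemma 1 of Yau confines the dependence among normalized periodogram ordinates to a vanishing fraction of frequencies near the origin, which suffices to establish the central limit theorem $\sqrt n\,\bar\psi_n\xrightarrow{d}N(0,V)$ together with $\tfrac1n\sum_{j=1}^n\psi_j\psi_j'\xrightarrow{p}V$. Writing $Z\sim N(0,V)$ for the limit of $\tfrac1{\sqrt n}\bar S_{n+1}$, the quadratic form converges to $Z'V^{-1}Z$, and since this is a just-identified problem with a $k$-dimensional score, $Z'V^{-1}Z\sim\chi^2_k$, giving $W^*(\beta)\xrightarrow{d}\chi^2_k$.

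The step I expect to be the main obstacle is controlling the multiplier expansion tightly enough to guarantee $\xi(\beta)=O_p(n^{-1/2})$ and an $o_p(1)$ remainder while respecting the long-memory dependence. In the i.i.d. setting of Chen et al.\ (2008) this rests on $\max_{1\le j\le n+1}\|\psi_j\|=o_p(\sqrt n)$ and a routine argument; here the same envelope bound must instead be verified for the dependent periodogram ordinates, and the pseudo-term must be checked to respect it, which it does since $\psi_{n+1}=O_p(\log n/\sqrt n)$. Ensuring the remainders in the expansion are $o_p(1)$ under weak dependence rather than independence is precisely where Yau's spectral-density and moment bounds do the real work.
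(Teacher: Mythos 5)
Your proposal is correct, and its skeleton --- negligibility of the pseudo-term $\psi_{n+1}=-a_n\bar\psi_n$, the bound $\xi(\beta)=O_p(n^{-1/2})$ (which you rightly flag as the main technical obstacle; the paper spends the first half of its proof on exactly this, via the envelope bound $\max_j\|\psi_j\|=o_p(n^{1/2})$ from Lemma 3 of Owen, 1990), and a second-order expansion reducing $W^*(\beta)$ to a quadratic form --- matches the paper's proof. Where you genuinely differ is in how the $\chi^2_k$ limit is extracted at the end. You keep everything in terms of the score: $W^*(\beta)=\bigl(n^{-1/2}\bar S_{n+1}\bigr)'\bigl(n^{-1}S_{n+1}\bigr)^{-1}\bigl(n^{-1/2}\bar S_{n+1}\bigr)+o_p(1)$, then invoke a CLT for $\sqrt n\,\bar\psi_n$ together with the information identity that $n^{-1}\sum_{j}\psi_j\psi_j'$ converges in probability to the \emph{same} matrix $V$; this is the Owen/Chen-et-al.\ style of argument, and it has the merit of making explicit that the crux under long memory is precisely this variance matching, which Yau's Lemma 1 (dependence confined to frequencies near the origin) is needed to secure. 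The paper instead routes through the Whittle estimator $\beta_n$: it Taylor-expands $\bar\psi_n=\hat A(\beta_n)(\beta-\beta_n)+o_p(n^{-1/2})$, solves the constraint to get $\xi=\hat\Sigma(\beta_n)^{-1}\hat A(\beta_n)(\beta-\beta_n)+o_p(n^{-1/2})$, and arrives at $W^*(\beta)=n(\beta-\beta_n)'\hat V^{-1}(\beta-\beta_n)+o_p(1)$ with $\hat V=\hat A(\beta_n)^{-1}\hat\Sigma(\beta_n)\{\hat A(\beta_n)'\}^{-1}$; the limit then follows from asymptotic normality of the Whittle estimator (Dzhaparidze, 1986) and consistency of $\hat V$ (Monti, 1997, Appendix 1). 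The two routes are mathematically equivalent --- they are linked exactly by the paper's expansion (A.4) --- but they lean on different imported results: yours requires the score CLT with matching variance to be cited directly, while the paper's delegates that identity to known Whittle-estimation asymptotics and gains, as a byproduct, the interpretable intermediate representation of $W^*$ as a Wald-type quadratic form in $\beta-\beta_n$. One bookkeeping note: the pseudo-term's contribution to $n^{-1}S_{n+1}$ is $n^{-1}a_n^2\bar\psi_n\bar\psi_n'=O_p(\log^2 n/n^2)$, a factor $n$ smaller than the $O_p(\log^2 n/n)$ you state; either way it is $o_p(1)$, so nothing in your conclusion is affected.
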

\begin{proof}
The proof is provided in Appendix. 
\end{proof}

\section{Simulations}
\subsection{Confidence Region}
In this section, we compare the confidence regions based on the AEL method and the EL method for ARFIMA models with different sample sizes and different distributions of error terms. 

\subsubsection{ARFIMA (1,d,0) model with different sample sizes}
We consider ARFIMA (1,d,0) model given by:  
$$(1-\phi B) Z_t = (1-B)^{-d} a_t$$
where $a_t$ is the white noise process with mean zero and variance $\sigma^2$. We take $\phi=0.2, d=0.3$. We considered the length $T$ of observations to be 100 and 1500 to illustrate the usefulness of adjusted empirical likelihood under small sample sizes.  ${W}^*(\beta)$ is calculated at different points over the parameter space $(\phi,d) \in \{(0,1) \times(0,0.5)\}$ by taking $a_t \sim N(0,1)$ and 95\% adjusted empirical likelihood confidence regions for the parameters of the model are produced using contour plots based on the critical value of ${\chi}^2_{2,0.95}$.The adjusted empirical likelihood confidence region is compared with the unadjusted empirical likelihood confidence region under each sample size $T$.

Figure 1 shows the 95\% adjusted empirical likelihood (solid line) and unadjusted empirical likelihood (dashed line) confidence regions for the parameters of an ARFIMA(1,d,0) model with $T=100$ and $T=1500$ observations. In each case the mean is subtracted from the white noise processes in order to have mean zero for the error terms. It can be seen that with the same nominal level, the confidence contours for adjusted empirical likelihood contain the ones based on the unadjusted empirical likelihood, especially when the sample size is small. The difference is clearly for small sample size whereas for large sample size the two methods give similar contours but a closer look will ensure that adjusted empirical likelihood gives confidence regions which still contain the unadjusted empirical likelihood confidence region.
\begin{figure}[H]
    \centering
    \begin{subfigure}[b]{0.4\textwidth}
        \includegraphics[width=\textwidth]{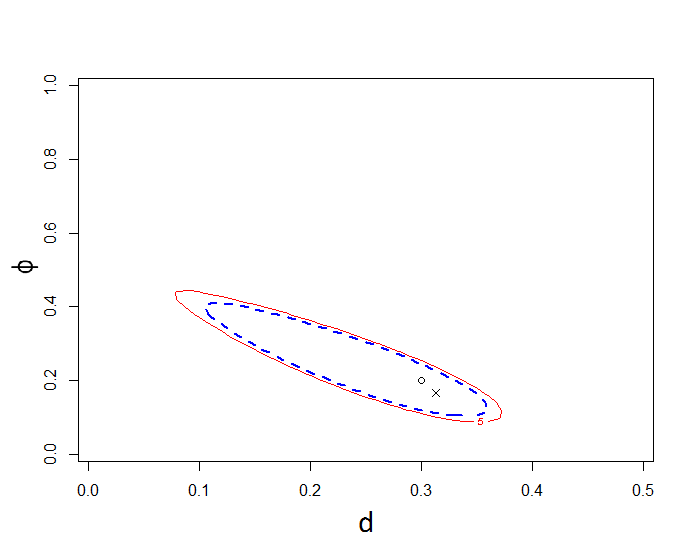}
        \caption{\small T=100}
        \label{fig:T=100}
    \end{subfigure}
    \begin{subfigure}[b]{0.4\textwidth}
        \includegraphics[width=\textwidth]{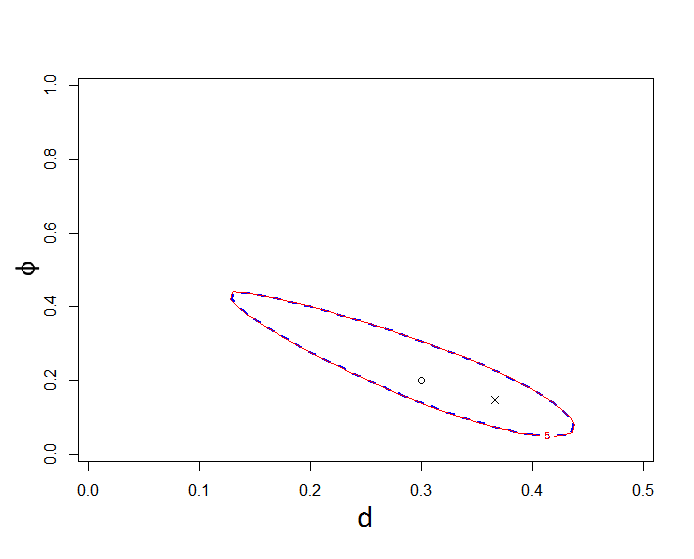}
        \caption{\small T=1500}
        \label{fig:T=1500}
    \end{subfigure}
    \captionsetup{justification=centering}
    \caption{\small 95\% adjusted empirical likelihood (solid line) and unadjusted empirical likelihood (dashed line) confidence regions for the parameters of an ARFIMA(1,d,0) model with $a_t \sim N(0,1)$, `o' is the true parameter, `+' is the estimated value value.}\label{fig1:afrima(1,d,0)}
\end{figure}

\subsubsection{ARFIMA (1,d,0) model with different error distributions}
For ARFIMA(1,d,0) model, we consider the behavior of adjusted empirical likelihood confidence regions under four different distributions for the white noise process $a_t$:~$t_5$,~$t_{10}$,\\~$\exp(1)$ and ${\chi}^2_5$. The latter two distributions are centered around zero. In these cases, we consider $\phi=0.2, d=0.3$ and $T=1500$.

Figure 2 shows the 95\% adjusted empirical likelihood confidence regions for ARFIMA models along with that of the unadjusted empirical likelihood under the above mentioned four different error distributions. In order to have a mean zero for the white noise process, the mean is subtracted in each case. Since the sample size is large, the confidence contours based on the AEL and the EL are most overlapped.  It can be seen that depending on the distribution of the white noise process, the shapes of the contours of the confidence regions are changing in order to adopt the differences in error distributions.

\subsubsection{ARFIMA (0,d,1) model with different error distributions}
We consider ARFIMA (0,d,1) model: 
$$(1-B)^d Z_t = (1+\theta B) a_t,$$
where $a_t$ is the white noise process with mean zero and variance $\sigma^2$. Fix $\phi=0.2, d=0.3$ and $T=1500$. For the adjusted empirical likelihood confidence regions, $\hat{W}^*(\beta)$ are calculated at different points over the parameter space $(\theta,d) \in \{(0,1) \times(0,0.5)\}$ under five different different distributions for the white noise process $a_t$:~N(0,1),~$t_5$,~$t_{10}$,~$\exp(1)$ and ${\chi}^2_5$. The latter two distributions are centered around zero. Figure 3 shows that the shapes of the confidence regions for models under different white noise distributions are different as it changes to adapt the differences in error distributions depicting the non-parametric property of adjusted empirical likelihood.

\subsection{Coverage Probabilities}
In this section, a Monte Carlo experiment is conducted to explore the accuracy of the adjusted empirical likelihood confidence regions for ARFIMA model in terms of coverage probability. To make a fair comparison to the EL method proposed by Yau (2012) and the other unadjusted EL method with and without Bartlett correction, 
we consider ARFIMA(0,d,0) model: $ Z_t = (1-B)^{-d} a_t$ with various sample sizes, values of~$d$~and distributions of the white noise term. The simulations are carried out under two different distributions for the error terms, $a_t$: $N(0,1)$ and $t_5$. In both cases the mean is subtracted from the white noise process in order to make the white noise process to have mean zero. The simulations are conducted for different values of $d=(0.1, 0.2, 0.3, 0.4, 0.49)$. Under each case, 1000 series of size $T=(50, 70, 100, 200)$ are drawn and the coverage probabilities are computed. We choose $a_n=\log(n)/2$ as in the definition of $\psi_{n+1}$.

The adjusted empirical likelihood coverage probabilities are compared with the unadjusted empirical likelihood coverage probabilities. The coverage probabilities of intervals based on theoretical and estimated Bartlett-correction (DiCiccio et al., 1991) are also computed for comparison purpose. Table 1 provides the results for the nominal level of 95\%. It can be seen that the coverage probabilities of the adjusted empirical likelihood are closer to the nominal value of 0.95 under each sample size and error distribution considered. For small sample sizes, the adjusted empirical likelihood gives more accurate results than the unadjusted empirical likelihood method. Further it shows that although the theoretical and estimated Bartlett-correction methods give improved results than that of the unadjusted empirical likelihood method, neither seems to give better results than adjusted empirical likelihood method.

\begin{figure}[H]
    \centering
    \begin{subfigure}[b]{0.3\textwidth}
        \includegraphics[width=\textwidth]{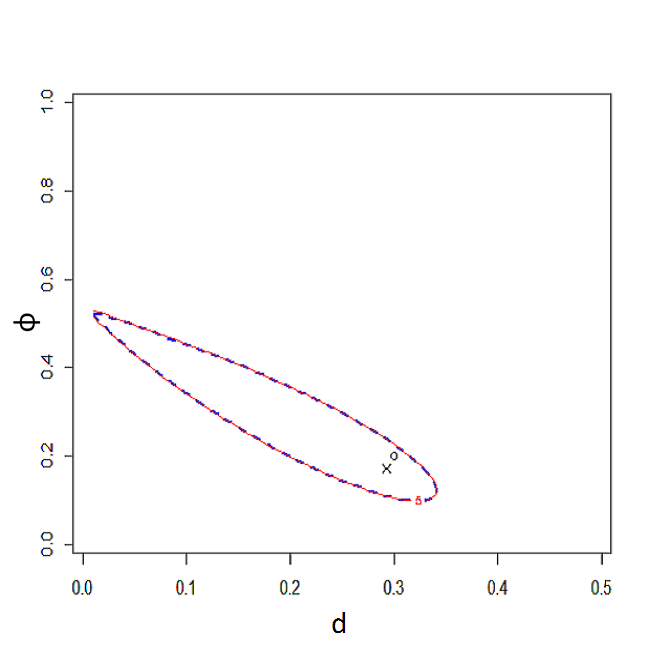}
        \caption{\small $a_t \sim t_5$}
        \label{fig:t5}
    \end{subfigure}
    \begin{subfigure}[b]{0.3\textwidth}
        \includegraphics[width=\textwidth]{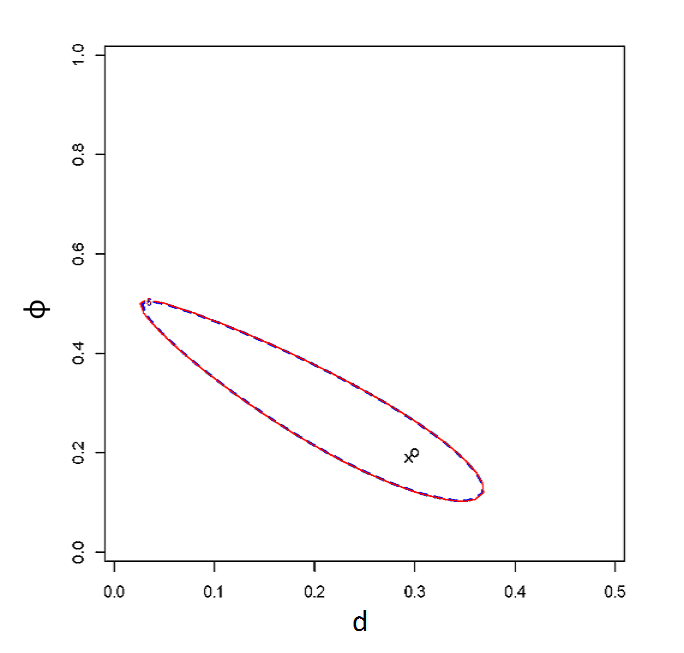}
        \caption{\small $a_t \sim t_{10}$}
        \label{fig:t10}
    \end{subfigure}
    \begin{subfigure}[b]{0.3\textwidth}
        \includegraphics[width=\textwidth]{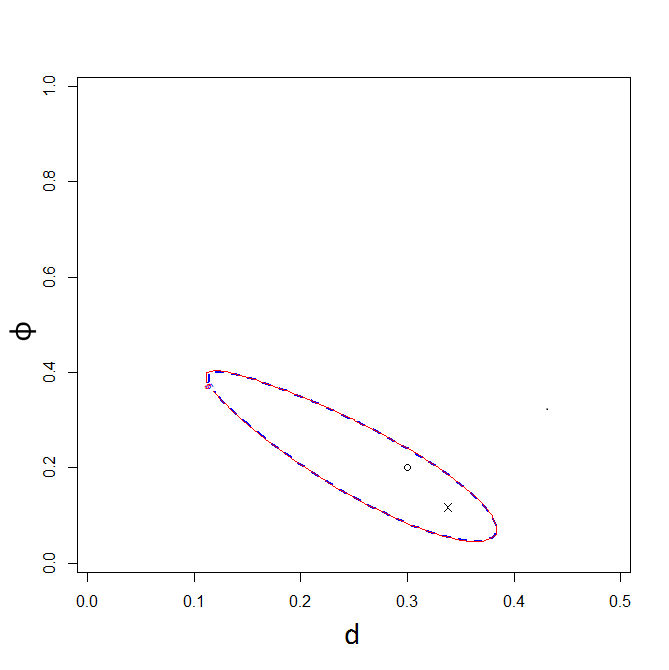}
        \caption{\small $a_t \sim \chi^2_5$}
        \label{fig:exp(1)}
    \end{subfigure}
        \begin{subfigure}[b]{0.3\textwidth}
        \includegraphics[width=\textwidth]{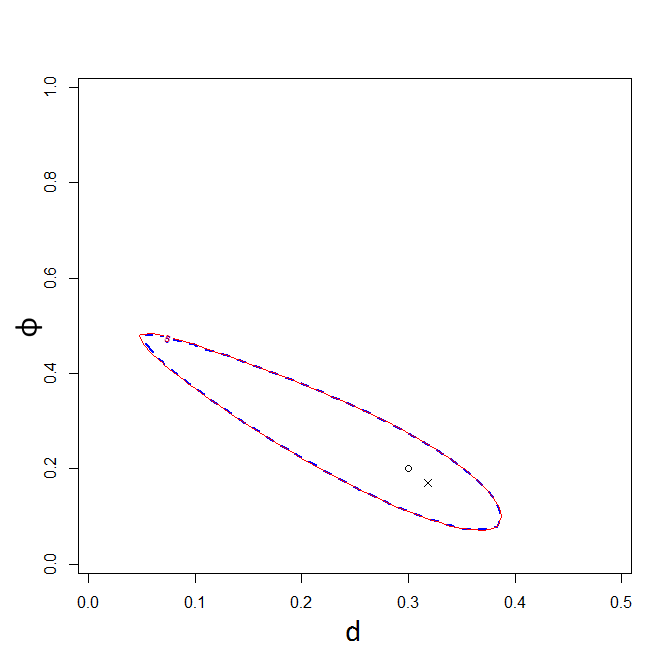}
        \caption{\small $a_t \sim \exp(1)$}
        \label{fig:chisq(5)}
    \end{subfigure}
    \captionsetup{justification=centering}
    \caption{\small 95\%adjusted empirical likelihood (solid line) and unadjusted empirical likelihood (dashed line) confidence regions for the parameters of an ARFIMA(1,d,0) model, `o' is the true parameter, `+' is the estimated value value.}\label{fig2:afrima(1,d,0)}
\end{figure}

\section{Discussion}
In this paper, we propose an adjusted empirical likelihood to extend Yau's (2012) method for long-memory time series models, specifically ARFIMA models, by adopting the idea of Chen et al. (2008). The asymptotic null distribution of the adjusted empirical likelihood statistic for long-memory time series models has been established as a standard chi-square distribution. Confidence contours for ARFIMA(1,d,0) and ARFIMA(0,d,1) models based on AEL and EL methods are drawn with different sample sizes and different error distributions to illustrate the comparison. Simulations for ARFIMA(0,d,0) with different distributions for white noise process have been carried out to illustrate the performance of the proposed AEL method. Coverage probabilities of the AEL method have been compared to the unadjusted EL method with and without estimated and theoretical Bartlett-corrected ones under different sample sizes. The results indicate that the proposed AEL method compares favorably with other methods, especially when the sample size is small.

\begin{figure}[H]
    \centering
    \begin{subfigure}[b]{0.3\textwidth}
        \includegraphics[width=\textwidth]{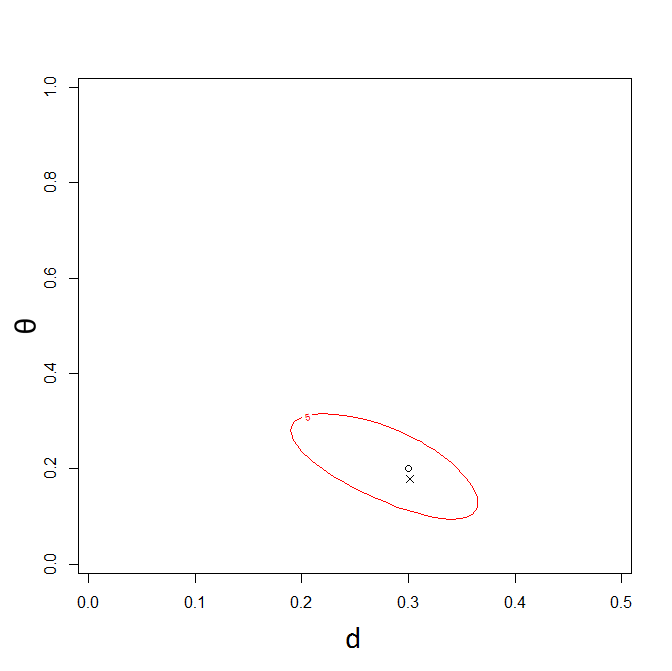}
        \caption{\small $a_t \sim N(0,1)$}
        \label{fig:t5}
    \end{subfigure}
    \begin{subfigure}[b]{0.3\textwidth}
        \includegraphics[width=\textwidth]{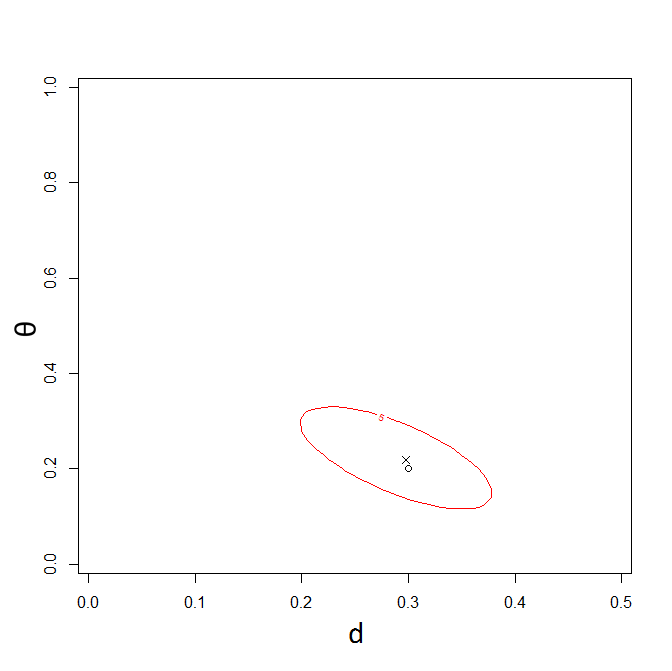}
        \caption{\small $a_t \sim t_5$}
        \label{fig:t5}
    \end{subfigure}
    \begin{subfigure}[b]{0.3\textwidth}
        \includegraphics[width=\textwidth]{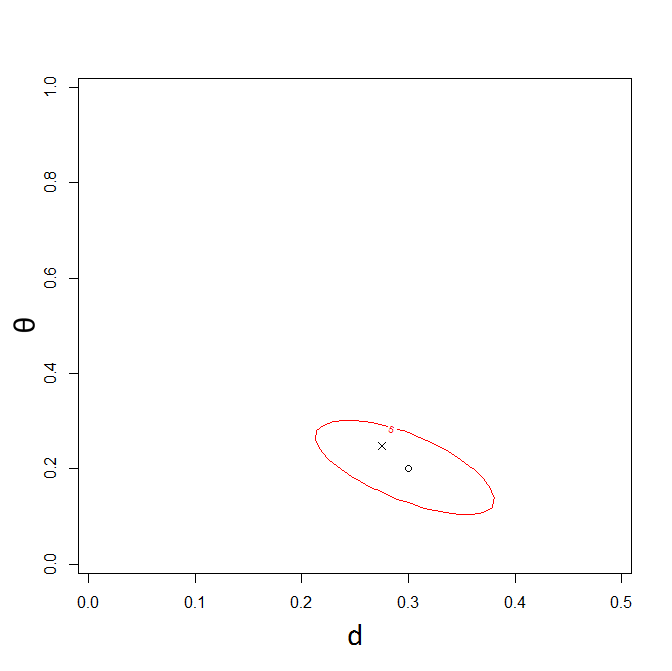}
        \caption{\small $a_t \sim t_{10}$}
        \label{fig:t10}
    \end{subfigure}
    \begin{subfigure}[b]{0.3\textwidth}
        \includegraphics[width=\textwidth]{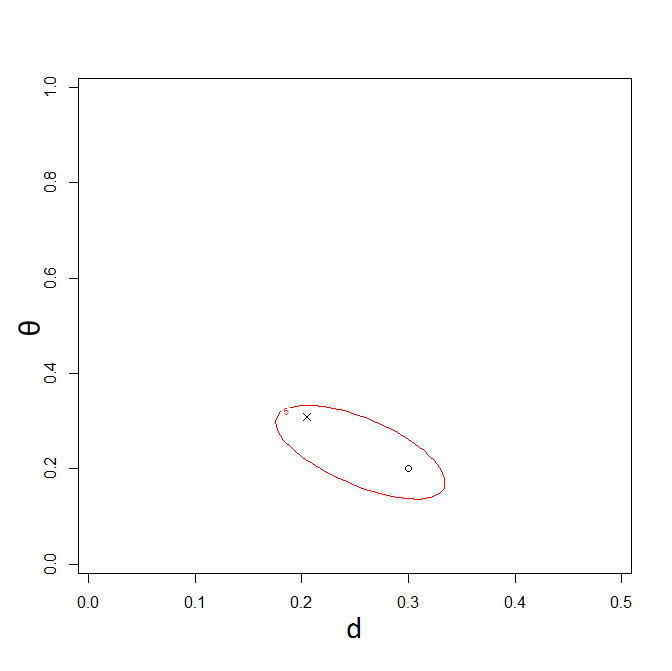}
        \caption{\small $a_t \sim \chi^2_5$}
        \label{fig:exp(1)}
    \end{subfigure}
        \begin{subfigure}[b]{0.3\textwidth}
        \includegraphics[width=\textwidth]{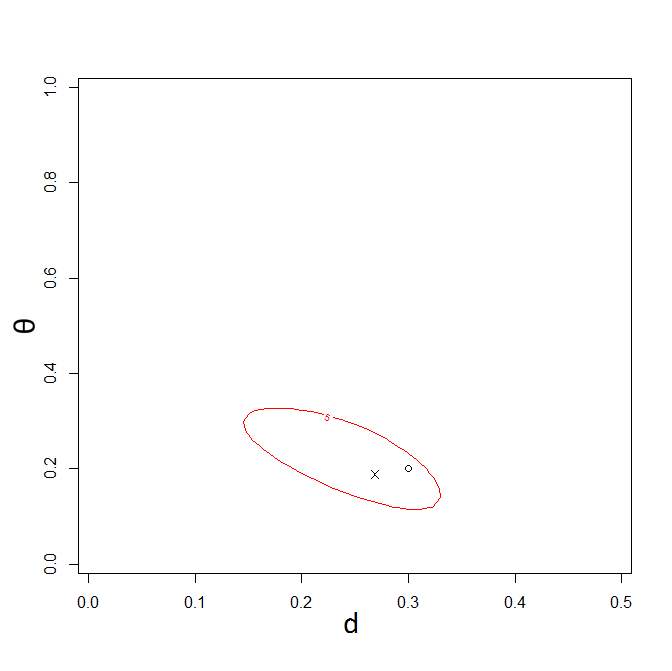}
        \caption{\small $a_t \sim \exp(1)$}
        \label{fig:chisq(5)}
    \end{subfigure}
    \captionsetup{justification=centering}
    \caption{\small 95\%adjusted empirical likelihood confidence regions for the parameters of an ARFIMA(0,d,1) model, `o' is the true parameter, `+' is the estimated value value.}\label{fig:afrima(0,d,1)}
\end{figure}

\begin{table}[H]
\caption{\textit{\textsc{arfima} (0,d,0) model} - with n=(T-1)/2} 
\centering
\small
\begin{tabular}{c c c c c c c}
\hline\hline 
T & Method & d=0.1 & d=0.2 & d=0.3 & d=0.4 & d=0.49 \\ [0.5ex]
\hline
\multicolumn{7}{ c }{Model: $a_t\sim N(0,1)$} \\ [0.5ex]
\multirow{4}{*}{T=50} & EL & 0.847 & 0.838 & 0.834 & 0.838 & 0.840 \\
 & EB & 0.861 & 0.854 & 0.843 & 0.849 & 0.853 \\
 & TB & 0.871 & 0.861 & 0.857 & 0.854 & 0.859 \\
 & AEL & 0.884 & 0.876 & 0.874 & 0.871 & 0.875 \\ [0.5ex]
 \multirow{4}{*}{T=70} & EL & 0.863 & 0.860 & 0.847 & 0.852 & 0.867 \\
 & EB & 0.871 & 0.866 & 0.859 & 0.860 & 0.878 \\
 & TB & 0.879 & 0.872 & 0.863 & 0.863 & 0.883 \\
 & AEL & 0.887 & 0.878 & 0.876 & 0.877 & 0.894 \\ [0.5ex]
\multirow{4}{*}{T=100}& EL & 0.890 & 0.888 & 0.892 & 0.891 & 0.894 \\
 & EB & 0.896 & 0.893 & 0.896 & 0.898 & 0.899 \\
 & TB & 0.900 & 0.897 & 0.898 & 0.900 & 0.902 \\
 & AEL & 0.899 & 0.902 & 0.904 & 0.903 & 0.902 \\ [0.5ex]
\multirow{4}{*}{T=200}& EL & 0.905 & 0.907 & 0.908 & 0.907 & 0.911 \\
 & EB & 0.910 & 0.915 & 0.912 & 0.915 & 0.917 \\
 & TB & 0.913 & 0.918 & 0.916 & 0.925 & 0.926 \\
 & AEL & 0.920 & 0.916 & 0.914 & 0.916 & 0.917 \\ [1ex]

\multicolumn{7}{ c }{Model: $a_t\sim t_5$} \\ [0.5ex]
\multirow{4}{*}{T=50} & EL & 0.871 & 0.871 & 0.865 & 0.860 & 0.861 \\
 & EB & 0.880 & 0.878 & 0.869 & 0.869 & 0.873 \\
 & TB & 0.889 & 0.884 & 0.877 & 0.871 & 0.882 \\
 & AEL & 0.898 & 0.894 & 0.891 & 0.885 & 0.888 \\ [0.5ex]
 \multirow{4}{*}{T=70} & EL & 0.869 & 0.867 & 0.872 & 0.876 & 0.887 \\
 & EB & 0.879 & 0.880 & 0.79 & 0.881 & 0.892 \\
 & TB & 0.886 & 0.883 & 0.881 & 0.886 & 0.896 \\
 & AEL & 0.893 & 0.887 & 0.887 & 0.896 & 0.904 \\ [0.5ex]
\multirow{4}{*}{T=100} & EL & 0.897 & 0.894 & 0.898 & 0.896 & 0.899 \\
 & EB & 0.905 & 0.897 & 0.902 & 0.903 & 0.904 \\
 & TB & 0.909 & 0.901 & 0.905 & 0.904 & 0.906 \\
 & AEL & 0.908 & 0.909 & 0.907 & 0.908 & 0.906 \\ [0.5ex]
\multirow{4}{*}{T=200} & EL & 0.912 & 0.912 & 0.913 & 0.915 & 0.910 \\
 & EB & 0.913 & 0.913 & 0.917 & 0.918 & 0.918 \\
 & TB & 0.917 & 0.914 & 0.918 & 0.925 & 0.925 \\
 & AEL & 0.918 & 0.914 & 0.919 & 0.919 & 0.917 \\ [1ex] \hline
\multicolumn{7}{ l }{EL= empirical likelihood; TB=EL with theoretical Bartlett correction;} \\
\multicolumn{7}{ l }{EB=EL with estimated Bartlett correction; AEL=Adjusted EL.} \\
\end{tabular}
\label{table:Table1}
\end{table}

\section*{Appendix}
In this section, we give the brief proof of Theorem 1. Detailed proof is available upon request. 
\begin{proof}
\noindent First we prove that $\xi=O_p(n^{-\frac{1}{2}})$. The adjusted empirical likelihood ratio function is,
$$W^*(\theta) = -2\,\sup \bigg \{ \sum_{j=1}^{n+1}log[(n+1)p_j]| p_j \geq 0,j=1,...n+1;\sum_{j=1}^{n+1}p_j=1; \sum_{j=1}^{n+1}\psi _j(I(\omega _j),\beta )=0 \bigg\} $$
where $\psi_{n+1} = -\frac{a_n}{n} \sum_{j=1}^{n}\psi_j = -a_n {\bar{\psi}}_n$ and $a_n=\max(1,log(n)/2)=o_p(n)$. We will show that $W^*(\theta) \sim {\chi_k}^2$. First we need to show that $\xi=O_p(n^{-\frac{1}{2}})$. Denote $\psi_j \equiv  \psi_j(I(\omega_j),\beta)$. Assume $Var\{\psi(I(\omega),\beta)\}$ is finite and has rank $q<m (=dim(\psi))$. Let the eigenvalues of $Var\{\psi(I(\omega),\beta)\}$ be ${\sigma_1}^2\leq {\sigma_2}^2 \leq ...\leq {\sigma_m}^2$.
WLOG, assume ${\sigma_1}^2 = 1 $. Let $\xi$ be the solution of
\begin{equation}
\sum_{j=1}^{n+1} \frac{\psi_j}{1+\xi^\prime \psi_j} = 0. \tag{A.1}\label{eq:A1}
\end{equation}
Let $\psi^* = \smash{\displaystyle\max_{1 \leq j \leq n}} \parallel \psi_j\parallel$. Since $\left| \psi_j \right| \geq0, j=1,...,n$ are independent, by Lemma 3 of Owen (1990), we have $\psi^*= \smash{\displaystyle\max_{1 \leq j \leq n}} \parallel \psi_j\parallel = o_p(n^{\frac{1}{2}})$ if $E({\left|\psi_j\right|}^2)<\infty.$ By CLT, we have ${\bar{\psi}}_n=\frac{1}{n}\sum_{j=1}^{n}\psi_j=O_p(n^{-\frac{1}{2}})$. 

Let $\xi=\rho\theta$ where $\rho\geq0$ and $\parallel\theta\parallel=1$. Multiplying both sides of \eqref{eq:A1} by $n^{-1}\theta^\prime$, we obtain 
\begin{align*}
0&= n^{-1}\theta^\prime \sum_{j=1}^{n+1}\frac{\psi_j}{1+\xi^\prime\psi_j} = \frac{\theta^\prime}{n}\sum_{j=1}^{n+1}\bigg[ \psi_j - \frac{\xi^\prime{\psi_j}^2}{1+\xi^\prime\psi_j}\bigg]\\
&= \frac{\theta^\prime}{n}\sum_{j=1}^{n+1}\psi_j - \frac{\rho}{n} \sum_{j=1}^{n+1}\frac{({\theta^\prime\psi_j})^2}{1+\rho\theta^\prime\psi_j}\\
&\leq \theta^\prime \bar{\psi}_n \bigg( 1-\frac{a_n}{n}\bigg) - \frac{\rho}{n(1+\rho\psi^*)} \sum_{j=1}^{n}(\theta^\prime\psi_j)^2 \tag{$(n+1)^{th}$ term in the second summation is non-negative}\\
&= \theta^\prime \bar{\psi}_n - \theta^\prime {\bar{\psi}}_n \frac{a_n}{n} - \frac{\rho}{n(1+\rho\psi^*)}\sum_{j=1}^{n}(\theta^\prime\psi_j)^2\\
&=\theta^\prime \bar{\psi}_n - \frac{\rho}{n(1+\rho\psi^*)}\sum_{j=1}^{n}(\theta^\prime\psi_j)^2 + O_p(n^{-\frac{3}{2}}a_n). \tag{A.2} \label{eq:A2}
\end{align*}

\noindent The assumption on $Var\{\psi(I(\omega),\beta)\}$ implies that
\begin{align*}
\frac{1}{n}\sum_{j=1}^{n}(\theta^\prime\psi_j)^2 \geq (1-\epsilon){\sigma_1}^2 = 1-\epsilon
\end{align*}
in probability for some $0 < \epsilon <1$. So as long as $a_n = o_p(n)$, \eqref{eq:A2} implies that
\begin{align*}
\frac{\rho}{1+\rho\psi^*} &\leq \theta^\prime\bar{\psi}_n(1-\epsilon)^{-1} = O_p(n^{-\frac{1}{2}}).
\end{align*}
Since
$$\theta^\prime \bar{\psi}_n - \frac{\rho}{n(1+\rho\psi^*)}\sum_{j=1}^{n}(\theta^\prime\psi_j)^2 \geq 0, $$
$$\theta^\prime\bar{\psi}_n \geq \frac{\rho}{(1+\rho\psi^*)}\frac{\sum_{j=1}^{n}(\theta^\prime\psi_j)^2}{n} \geq \frac{\rho}{(1+\rho\psi^*)} (1-\epsilon). $$
\noindent Hence $$\frac{\rho}{(1+\rho\psi^*)} \leq \theta^\prime \bar{\psi}_n (1-\epsilon)^{-1}.$$
Therefore, $\rho = \parallel\xi\parallel=O_p(n^{-\frac{1}{2}})$ which implies $\xi=O_p(n^{-\frac{1}{2}})$.\\

Now we need to prove that $W^*(\theta) \sim {\chi_k}^2$. Under suitable regularity conditions (Dzhaparidze, 1986), $n^{-\frac{1}{2}} (\beta_n - \beta) \sim N(0,V)$ asymptotically, where $V$ is the covariance matrix of $\beta$ and $\beta_n$ is the Whittle's estimator of $\beta$. Therefore, $\beta - \beta_n = O_p(n^{-\frac{1}{2}})$ which implies $\beta = \beta_n + n^{-\frac{1}{2}}u$ with $\left| u \right| < +\infty$. From \eqref{eq:A1}, we have
\begin{align*}
0 &=\sum_{j=1}^{n+1} \frac{\psi_j}{1+\xi^\prime \psi_j}\\
&= \frac{1}{n}\sum_{j=1}^{n+1}\psi_j [1 - \xi^\prime \psi_j]\\
&= \bar{\psi}_n + \frac{1}{n} \psi_{n+1} - \frac{1}{n}\sum_{j=1}^{n+1}\psi_j \xi^\prime {\psi_j}^\prime \tag{$\frac{1}{n} \psi_{n+1} = o_p(n^{-\frac{1}{2}})$} \\
&= \bar{\psi}_n - \frac{1}{n}\sum_{j=1}^{n}\psi_j \xi^\prime {\psi_j}^\prime + o_p(n^{-\frac{1}{2}}). \tag{A.3} \label{eq:A3}
\end{align*}
Using Taylor expansion at $\beta=\beta_n$, we have,
\begin{equation}
\bar{\psi}_n = \frac{1}{n}\sum_{j=1}^{n}\psi_j = \frac{1}{n}\sum_{j=1}^{n}\frac{\partial \psi_j(I(\omega_j),t)}{\partial t^\prime} \bigg|_{\scriptscriptstyle \beta_n} (\beta-\beta_n) + O_p(n^{-1}) = \hat{A}(\beta_n) (\beta-\beta_n) + o_p(n^{-\frac{1}{2}}), \tag{A.4} \label{eq:A4}
\end{equation}
where 
$$\hat{A}(\beta_n) = \frac{1}{n}\sum_{j=1}^{n}\frac{\partial \psi_j(I(\omega_j),t)}{\partial t^\prime} \bigg|_{\scriptscriptstyle \beta_n}.$$
Since $\psi_j(I(\omega_j),\beta_n) \rightarrow \psi_j(I(\omega_j),\beta)$ in probability, from Monti (1997) we have,
\begin{equation}
\frac{1}{n}\sum_{j=1}^{n}\psi_j {\psi_j}^\prime = \frac{1}{n}\sum_{j=1}^{n}\psi_j(I(\omega_j),\beta_n){\psi_j(I(\omega_j),\beta_n)}^\prime + O_p(n^{-\frac{1}{2}}) = \hat{\Sigma}(\beta_n) + O_p(n^{-\frac{1}{2}}), \tag{A.5} \label{eq:A5}
\end{equation}
where 
$$\hat{\Sigma}(\beta_n) = \frac{1}{n}\sum_{j=1}^{n}\psi_j(I(\omega_j),\beta_n){\psi_j(I(\omega_j),\beta_n)}^\prime.$$
\noindent With \eqref{eq:A3}, \eqref{eq:A4} and \eqref{eq:A5} we have,
$$\hat{A}(\beta_n) (\beta-\beta_n) + o_p(n^{-\frac{1}{2}}) - \xi^\prime \hat{\Sigma}(\beta_n) - o_p(1) + o_p(n^{-\frac{1}{2}}) = 0,$$
$$ \Rightarrow \xi^\prime \hat{\Sigma}(\beta_n) = \hat{A}(\beta_n) (\beta-\beta_n) +  o_p(n^{-\frac{1}{2}}),$$
\begin{equation}
\Rightarrow \xi = { \hat{\Sigma}(\beta_n)}^{-1} \hat{A}(\beta_n) (\beta-\beta_n) +  o_p(n^{-\frac{1}{2}}). \tag{A.6} \label{eq:A6}
\end{equation}
By Taylor expansion, we obtain,
\begin{align*}
W^*(\beta) &= 2 \sum_{j=1}^{n+1}\log [1+\xi^\prime \psi_j(I(\omega_j),\beta)]\\
&= 2\sum_{j=1}^{n+1} \big[ \xi^\prime \psi_j(I(\omega_j),\beta) - \frac{1}{2} {(\xi^\prime \psi_j(I(\omega_j),\beta))}^2 \big] + o_p(1).
\end{align*}
and using \eqref{eq:A4}, \eqref{eq:A5} and \eqref{eq:A6} gives,
\begin{align*}
W^*(\beta) = n(\beta-\beta_n)^\prime \hat{V}^{-1} (\beta-\beta_n) + o_p(1),  \tag{A.7} \label{eq:A7}
\end{align*}
where
$$ \hat{V} = {\hat{A}(\beta_n)}^{-1} \hat{\Sigma}(\beta_n) {\{ {\hat{A}(\beta_n)}^\prime\} }^{-1}, $$
is a consistent estimator of the covariance matrix $V$ of $\beta$ which is proved in Appendix 1 in Monti (1997). Hence $W^*(\beta)$ converges to a standard chi-square distribution with $k$ degrees of freedom as $n \rightarrow \infty$.\\

\end{proof}

\end{document}